\newcommand{\keywords}[1]{\par\addvspace\baselineskip
\noindent\keywordname\enspace\ignorespaces#1}
\newcommand{\beeq}[1]{\begin{equation} \label{#1}}
\newcommand{\eeq}{\end{equation}}
\newcommand{\beeqs}{\begin{eqnarray*}}
\newcommand{\eeqs}{\end{eqnarray*}}
\renewcommand{\(}{\begin{eqnarray*}}
\renewcommand{\)}{\end{eqnarray*}}
\newcommand{\beeqn}{\begin{eqnarray}}
\newcommand{\eeqn}{\end{eqnarray}}
\newcommand{\nexteqline}{\\ &=&}
\newcommand{\refth}[1]{Theorem~\ref{#1}}
\newcommand{\reflm}[1]{Lemma~\ref{#1}}
\newcommand{\refprop}[1]{Proposition~\ref{#1}}
\newcommand{\refeq}[1]{(\ref{#1})}
\renewcommand{\quad}{\hspace*{3mm}}
\renewcommand{\qquad}{\hspace*{5mm}}
\newcommand{\lp}{\left(  }
\newcommand{\rp}{\right) }
\newcommand{\lb}{\left\{  }
\newcommand{\rb}{\right\} }
\newfont{\Bb}{msbm8 scaled\magstep1}
\newcommand{\E}{\mbox{\Bb E}}
\newcommand{\R}{\mbox{\Bb R}}
\newcommand{\ep}{\epsilon}
\newcounter{cnt1}
\newcounter{cnt2}
\newcounter{cnt3}
\newcounter{cnt4}
\newcounter{cnt5}
\newcounter{cnta}
\newcommand{\beenu}
{
\begin{list}{\arabic{cnt1}.}
{\usecounter{cnt1}
\leftmargin 4mm
\setlength{\leftmargin}{\leftmargin}
\topsep 0pt
\parsep 0pt
\itemsep 0pt}
}
\newcommand{\eenu}{\end{list}}
\newcommand{\beenub}
{
\begin{list}{\arabic{cnt1}-\arabic{cnt2}.}
{
\leftmargin  4mm
\setlength{\leftmargin}{\leftmargin}
\topsep 0pt
\parsep 0pt
\itemsep 0pt}
}
\newcommand{\eenub}{\end{list}}
\newcommand{\beenuc}
{
\begin{list}{\arabic{cnt1}-\arabic{cnt2}-\arabic{cnt3}.}
{
\leftmargin  4mm
\setlength{\leftmargin}{\leftmargin}
\topsep 0pt
\parsep 0pt
\itemsep 0pt}
}
\newcommand{\eenuc}{\end{list}}
\newcommand{\beenud}
{
\begin{list}{\arabic{cnt1}-\arabic{cnt2}-\arabic{cnt3}-\arabic{cnt4}.}
{
\leftmargin  4mm
\setlength{\leftmargin}{\leftmargin}
\topsep 0pt
\parsep 0pt
\itemsep 0pt}
}
\newcommand{\eenud}{\end{list}}
\newcommand{\beenue}
{
\begin{list}{\arabic{cnt1}-\arabic{cnt2}-\arabic{cnt3}-\arabic{cnt4}-\arabic{cnt5}.}
{
\leftmargin  4mm
\setlength{\leftmargin}{\leftmargin}
\topsep 0pt
\parsep 0pt
\itemsep 0pt}
}
\newcommand{\eenue}{\end{list}}
\newcommand{\beitm}
{
\begin{list}{$\bullet$}
{
\leftmargin  2mm
\setlength{\leftmargin}{\leftmargin}
\topsep 3pt
\parsep 0pt
\itemsep 2pt}
}
\newcommand{\eitm}{\end{list}}
\newcommand{\beitma}
{
\begin{list}{(\alph{cnta})~~}
{\usecounter{cnta}
\labelsep 0mm
\leftmargin  2mm
\setlength{\leftmargin}{\leftmargin}
\topsep 0pt
\parsep 0pt
\itemsep 0pt}
}
\newcommand{\broman}
{
\begin{list}{\roman{cnt1})}
{
\usecounter{cnt1}
\leftmargin 5mm
\setlength{\leftmargin}{\leftmargin}
\topsep 0pt
\parsep 0pt
\itemsep 0pt}
}
\newcommand{\eroman}{\end{list}}
\newcommand{\bRoman}
{
\begin{list}{\Roman{cnt1})}
{
\usecounter{cnt1}
\leftmargin 5mm
\setlength{\leftmargin}{\leftmargin}
\topsep 0pt
\parsep 0pt
\itemsep 0pt}
}
\newcommand{\eRoman}{\end{list}}
\newcommand{\balph}
{
\begin{list}{\alph{cnt1})}
{
\usecounter{cnt1}
\leftmargin 5mm
\setlength{\leftmargin}{\leftmargin}
\topsep 0pt
\parsep 0pt
\itemsep 0pt}
}
\newcommand{\ealph}{\end{list}}
\newcommand{\bAlph}
{
\begin{list}{\Alph{cnt1})}
{
\usecounter{cnt1}
\leftmargin 3mm
\setlength{\leftmargin}{\leftmargin}
\topsep 0pt
\parsep 0pt
\itemsep 0pt}
}
\newcommand{\eAlph}{\end{list}}
\newcommand{\bbullet}
{
\begin{list}{$\bullet$}
{
\leftmargin  4mm
\setlength{\leftmargin}{\leftmargin}
\topsep 3pt
\parsep 0pt
\itemsep 2pt}
}
\newcommand{\ebullet}{\end{list}}
\newcommand{\bdash}
{
\begin{list}{-}
{
\leftmargin 4mm
\setlength{\leftmargin}{\leftmargin}
\topsep 0pt
\parsep 0pt
\itemsep 0pt}
}
\newcommand{\edash}{\end{list}}
\begin{document}

\mainmatter  

\title{Partial-Match Queries with Random Wildcards: In Tries and Distributed Hash Tables}

\titlerunning{Partial-Match Queries with Random Wildcards}

%
%
\author{Junichiro Fukuyama}
\authorrunning{Junichiro Fukuyama}

\institute{Applied Research Laboratory, The Pennsylvania State University\\
jxf140@psu.edu}

%
%

\toctitle{Partial-Match Queries with Wildcards}
\tocauthor{}
\maketitle

\begin{abstract}
Consider an $m$-bit query $q$ to a bitwise trie $T$. A {\em wildcard $*$} is an unspecified bit in $q$ for which the query asks the membership for both cases $*=0$ and $*=1$. It is common that such {\em partial-match queries} with wildcards are issued in tries. 
With uniformly random occurrences of $w$ wildcards in $q$ assumed, the obvious upper bound on the average number of traversal steps in $T$ is $2^w m$. We show that the average does not exceed 
\[
\frac{m+1}{w+1} \lp 2^{w+2} - 2 w - 4 \rp  + m  = 
O \lp \frac{2^w  m}{w} \rp,
\]
and equals the value exactly when $T$ includes all the $m$-bit keys as the worst case. Here the query $q$ performs with the naive backtracking algorithm in $T$. It is similarly shown that the average is $O \lp \frac{k^w  m}{w} \rp$ in a general trie of maximum out-degree $k$. 
Our analysis for tries is extended to a distributed hash table (DHT), which is among the most frequently used decentralized data structures in networking. 
We show, under a natural probabilistic assumption for the largest class of DHTs, that the average number of hops required by an $m$-bit query $q$ to a DHT $D$ with random $w$ wildcards meets the same asymptotic bound. As a result, $q$ is answered with average $O \lp \frac{2^w  m}{w} \rp$ hops rather than $\Theta \lp 2^w m \rp$ in the four major DHTs Chord, Pastry, Tapestry and Kademlia. In addition, with a uniform key distribution for sufficiently many entries, we prove that a lookup request to the DHT Chord is answered correctly with $O \lp m \rp$ hops and probability $1 - 2^{-\Omega \lp m \rp}$. To the author's knowledge, the probability $1 - 2^{-\Omega \lp m \rp}$ of correct lookup in Chord has not been identified so far.

\keywords{partial-match query, trie, distributed hash table, Chord, Kademlia, Tapestry, Pastry, Koorde, wildcard matching}
\end{abstract}

\section{Introduction} \label{Introduction}

Finding information that partially matches to a given pattern has been a major problem in computer science for decades. In addition to the classical RK and KMP-algorithms in textbooks such as \cite{Textbook}, a collection of research results on {\em partial-match queries} is found in literature such as \cite{R76,CIP02}. It is common in practice to construct a trie 
as the data structure for partial-match queries with wildcards \cite{Sedgewick}. 
Here a {\em trie} is the well-known prefix tree data structure to store keys \cite{Knuth}, used for applications including dictionary search and lexicographic sorting. The most basic form of a trie $T$ is the {\em bitwise trie} to store $m$-bit integer keys. Denote by $q$ a query to such $T$. 
A {\em wildcard $*$} in $q$ is defined as an unspecified bit for which $q$ asks the membership for both cases $*=0$ and $*=1$. For example, $q=1*0*0$ is a 5-bit query asking if $10000$, $10010$, $11000$ and $11010$ are in $T$.

In the paper, we first analyze the average performance of an $m$-bit query $q$ to $T$ with random $w$ wildcards. Assume that $T$ is a bitwise trie for which the wildcards occur in $w$ positions in $q$, chosen randomly with the uniform probability density function (PDF), and also that $q$ performs with the naive backtracking algorithm. 
We show that the average number of steps in $T$ required by $q$ does not exceed 
\[
\frac{m+1}{w+1} \lp 2^{w+2} - 2 w - 4 \rp  + m  = 
O \lp \frac{2^w  m}{w} \rp,
\]
and is exactly equal to the value when $T$ includes all the $m$-bit keys as the worst case. 
This improves the obvious upper bound $2^w m$ asymptotically.  
We will also prove that the average is $O \lp \frac{k^w  m}{w} \rp$ in a general trie of maximum out-degree $k$.  The results have been unknown so far despite the common use of queries with wildcards in $T$. In Section 4, we will present an example of a practical system in which the above analysis could be useful.

The second half of the paper extends our analysis to a distributed hash table (DHT), which is among the most significant decentralized data structures used in networking. A DHT  can support a number of application services such as web caching, file sharing, name-address mapping to track node mobility \cite{DMap}, instant messaging, multicast, content distribution, etc. 
In \cite{DHTAnalysis}, detailed analysis is presented on the tradeoff between the routing table size and average number of hops per lookup ({\em network diameter}) in different DHTs. In the taxonomy, the class of DHTs with $O \lp \log n \rp$ routing table size and $O \lp \log n \rp$ network diameter is the largest one ($n$: the number of nodes in the DHT). We focus on this DHT class denoted by ${\cal C}$, which includes the four major DHTs Chord \cite{Chord}, Pastry \cite{Pastry}, Tapestry \cite{Tapestry}, and Kademlia \cite{Kademlia}.

We will see a structural similarity between a bitwise trie and DHT in order to answer an $m$-bit query. 
With the above $O \lp \frac{2^w  m}{w} \rp$ bound for bitwise tries and another probabilistic assumption, we show that the average number of hops required by an $m$-bit query $q$ to a DHT $D$ with random $w$ wildcards meets the same asymptotic bound. 
Arguing that the probabilistic assumption holds generally for the DHT class ${\cal C}$, we will especially confirm it for the above four DHTs.  
The result thus improves the theoretical upper bound on the lookup time with $w$ random wildcards from $O \lp 2^w \log n \rp$ to $O \lp \frac{2^w  \log n}{w} \rp$ in the four DHTs.

In addition, with a uniform key distribution for $\Omega \lp mn \rp$ entries, we prove that a lookup request to the DHT Chord is answered correctly with $O \lp m \rp$ hops and  probability $1 - 2^{-\Omega \lp m \rp}$. The probability $1 - 2^{-\Omega \lp m \rp}$ of correct lookup in Chord will be identified for the first time to the author's knowledge.

The rest of the paper is structured as follows. In Section 2, we will prove the $O \lp \frac{2^w  m}{w} \rp$ and $O \lp \frac{k^w  m}{w} \rp$ bounds
for tries $T$. Section 3 shows  the $O \lp \frac{2^w  m}{w} \rp$ bound for the four DHTs, and the probability $1 - 2^{-\Omega \lp m \rp}$ of correct lookup in Chord. It is followed by concluding remarks in Section 4.

\section{Average Search Time with Random Wildcards in Tries} \label{Trie}

\subsection{In a Bitwise Trie}

An {\em $m$-bit query $q$ to a bitwise trie $T$ with $w$ wildcards} is a string consisting of $m-w$ 0s and/or 1s, and $w$ wildcards *. 
We assume that the letters in $q$ are numbered $m$, $m-1$, $\ldots$, $1$ from the left to right ({\em bit positions}).

We measure the running time of a query by the number of edges in $T$ traversed by the search algorithm, calling them {\em steps}. A query $q$ with $w$ wildcards completes in no more than $2^w m$ steps. We use the  standard $O$, $\Omega$ and $\Theta$-notations to express asymptotic quantities. A {\em constant} in this paper means a fixed positive real number depending on no other variable.

In this section, we prove that $q$ with $w$ random wildcards takes average $O \lp \frac{2^w m}{w} \rp$ steps. By $w$ random wildcards, we mean the following uniform assumption.

\medskip

\noindent
{\bf Assumption I: In an {\boldmath $m$}-bit query {\boldmath $q$} with {\boldmath $w$} wildcards,  $*$ occurs in {\boldmath $w$} positions with the uniform PDF.} 

\medskip

\noindent
In other words, every wildcard pattern, or {\em configuration}, occurs with the same probability $1 \bigr/ {m \choose w}$. Here a configuration determines the wildcard positions of a query $q$ to $T$. For example, c*cc* is a configuration in which $c$ represents 0/1. If $q$ satisfies Assumption I, it is said to be a {\em query $q$ to $T$ with uniformly random $w$ wildcards}.

Also consider the following natural backtracking search algorithm in $T$:

\medskip

\noindent
{\bf Algorithm {\sc Query}:} {\em Started at the root of $T$, search for the key such that every wildcard $*$ is 0 in $q$. When the current key's membership is determined, backtrack to the node of $T$ representing the closest unfinished wildcard\footnote{This means $*$ such that the search for $*=0$ is finished but $*=1$ is not.} * in $q$. Change $*=0$ into $*=1$. Search for the new key in $T$. Continue until the memberships of all the $2^w$ keys are determined.} 

\medskip

The intuition behind the proof of $O \lp \frac{2^w  m}{w} \rp$ steps is the following. If the wildcards in $q$ occur in bit positions bounded by a small integer $j \ge 1$, it takes at most $m+ O \lp 2^w j \rp$ steps to answer $q$, which is much smaller than $2^w m$. Since wildcards are placed randomly with the uniform PDF, this must affect the asymptotic number of steps required by $q$.

We prove the proposition below. It will be extended to a general trie of maximum out-degree $k$ in the next subsection. 

\begin{proposition} \label{Main}
Let $q$ be an $m$-bit query to a trie $T$ with uniformly random $w$ wildcards. Algorithm {\sc Query} answers $q$ in no more than 
\[
b= \frac{m+1}{w+1} \lp 2^{w+2} - 2 w - 4 \rp  + m 
\]
average steps, and in exactly $b$ average steps when $T$ includes all the possible $m$-bit keys. 
\qed
\end{proposition}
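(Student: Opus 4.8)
My plan is to analyze the cost of Algorithm {\sc Query} by decomposing it over the $2^w$ keys it probes, and then averaging over the uniformly random configuration. Let me fix a configuration with wildcard positions $p_1 > p_2 > \cdots > p_w$ (counting from the left, so $p_i \in \{1,\dots,m\}$). The key observation is that when {\sc Query} backtracks from one probed key to the next, the number of steps incurred is governed by which wildcard it backtracks to: if it backtracks to the wildcard in position $p_i$, it climbs down to depth $m - p_i$ (measuring depth from the root as the number of bits read), flips that bit, and re-descends, so the round trip costs roughly $2 p_i$ steps (plus it must eventually read the remaining $m-p_i$ bits below, but those are charged once). More precisely, over the whole run the total number of steps is $m$ (the initial descent to the all-zeros key) plus, for each of the $2^w - 1$ backtracking events, a contribution determined by the position of the wildcard being incremented. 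In a run over the full binary subtree below the highest wildcard, wildcard $p_i$ gets incremented $2^{i-1}$ times (standard binary-counter accounting: the $i$-th least significant "digit" flips $2^{i-1}$ times among $2^w$ increments, counting the $j$-th from the top as the most significant). I would like to also keep track of which keys are actually present in $T$ — when $T$ is not the complete trie, some descents terminate early, which only decreases the count; this will give the inequality, with equality exactly when all $m$-bit keys are present so that every descent goes to full depth $m$.

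So the first main step is to show that for a fixed configuration $p_1 > \cdots > p_w$, the number of steps used by {\sc Query} on the complete trie is exactly
\[
S(p_1,\dots,p_w) = m + \sum_{i=1}^{w} 2^{w-i} \cdot 2(p_i + \text{(something)}),
\]
— I will need to be careful about the exact per-backtrack charge. The cleanest bookkeeping: think of the run as a DFS of the subtree spanned by the $2^w$ leaves; the DFS traverses each edge of that subtree exactly twice (down and up) except the edges on the final path which might be traversed once, but for counting "steps" as edge-traversals by the backtracking algorithm I will just say every tree edge of the relevant subtree is used twice and then subtract the over-count, or — more simply — directly sum the depths visited. The edge between depth $d-1$ and depth $d$ gets traversed (down then up) once for every maximal "block" of keys agreeing on the first $d-1$ bits; the number of such blocks at the level of wildcard $p_i$ is $2^{i-1}$. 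Summing $2 \cdot 2^{i-1}$ over the $p_i - p_{i+1}$ edge-levels strictly between consecutive wildcards, plus the $m$ edges read once on the final all-ones-below path, and being careful at the bottom segment below $p_w$ (traversed $2^{w-1}$ pairs of times... ), yields a closed form. I expect it to come out to something like $S = m + \sum_{i=1}^w (2^{w-i+1})( \,\cdot\,)$; the exact constants are what must match $b$ after averaging.

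The second main step is the averaging. Under Assumption~I each configuration $\{p_1,\dots,p_w\}$ is equally likely. A $w$-subset of $\{1,\dots,m\}$ has a well-known fact: the $i$-th largest element $p_i$ (equivalently, ordered so $p_1$ is largest) has expectation $\E[p_i] = \frac{(m+1)(w-i+1)}{w+1}$. (This is the standard "expected order statistics of a uniform random subset" identity, provable by a symmetry / stars-and-bars argument.) Plugging $\E[p_i]$ into the linear expression $S(p_1,\dots,p_w)$ from Step~1 and summing the resulting geometric-type series $\sum_i 2^{w-i}\cdot\frac{(m+1)(w-i+1)}{w+1}$ gives a closed form; I would verify it simplifies to $\frac{m+1}{w+1}(2^{w+2} - 2w - 4) + m$. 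The asymptotic claim $O(2^w m / w)$ is then immediate since $\frac{m+1}{w+1} \cdot 2^{w+2} = O(2^w m / w)$ and the remaining terms are lower order.

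The step I expect to be the real obstacle is making the per-configuration count $S(p_1,\dots,p_w)$ precisely right, especially (a) pinning down the exact constant in each backtrack charge — whether a backtrack to wildcard $p_i$ costs $2p_i$, $2p_i - c$, or $2(p_i) + (\text{bits below})$ once we decide exactly what "step" counts and how the final descent is handled — and (b) establishing that the complete trie is genuinely the worst case: I must argue that removing keys (i.e., pruning subtrees of $T$) can only shorten or leave unchanged each descent, hence only decreases $S$, so $\le b$ holds in general with equality for the complete trie. Part (b) is intuitively clear (a missing subtree is detected at its branch point, cutting the descent short), but stating it cleanly — e.g. by a coupling/monotonicity argument comparing the DFS on $T$ with the DFS on the complete trie edge by edge — is where I would spend the most care.
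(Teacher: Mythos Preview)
Your plan is correct and follows the same two-stage structure as the paper: first bound the step count for a fixed configuration, then average over configurations. The per-configuration formula you are looking for is exactly
\[
S(p_1,\dots,p_w) \;=\; m + \sum_{i=1}^{w} 2^{\,i}\, p_i
\qquad (p_1 > p_2 > \cdots > p_w),
\]
which in the paper's indexing (with $z_j$ the $j$-th \emph{smallest} wildcard position) reads $m+\sum_{j=1}^{w}2^{w-j+1}z_j$. The paper obtains this by a short induction on $w$ (Lemma~1): set the most significant wildcard to $0$, recurse; backtrack $z_w$ steps; set it to $1$, recurse on the lower $z_w$ bits. That induction also makes your worst-case concern~(b) immediate, since each recursive call is an upper bound with equality precisely when every descent reaches full depth~$m$; you may find it cleaner than the DFS/binary-counter bookkeeping you sketch.

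Where your route genuinely diverges is the averaging step. The paper writes the expectation as a hypergeometric sum and then evaluates $\sum_{z}\binom{z}{j}\binom{m-z}{w-j}=\binom{m+1}{w+1}$ via generating functions before summing $\sum_j j\,2^{w-j+1}$. Your shortcut---plugging the order-statistic identity $\E[p_i]=\frac{(m+1)(w-i+1)}{w+1}$ directly into the linear expression $S$---bypasses that machinery entirely and lands on the same final sum $\frac{m+1}{w+1}\sum_j j\,2^{w-j+1}=\frac{m+1}{w+1}(2^{w+2}-2w-4)$ in one line. This is a legitimate simplification: linearity of expectation plus the standard fact about uniform subsets is all that is needed, and nothing is lost.
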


We construct its proof in what follows. We first show that the average is at most 
\beeq{smw}
s(m, w) \stackrel{def}{=} m + \sum_{\scriptstyle 1 \le z \le m \atop 1 \le j \le w} 
j 2^{w-j+1} \frac{{z \choose j}{m-z \choose w-j}}{{m \choose w}}. 
\eeq
The fraction ${z \choose j} {m-z \choose w-j}\Big/{m \choose w}$ is well-known as the {\em hypergeometric distribution} \cite{hyper}. It is the probability of $j$ successes in $w$ draws without replacement, from $m$ items including $z$ successes and $m-z$ failures.

Let $z_1$ denote the position of the rightmost wildcard bit, $i.e.$, the {\em least significant wildcard} in the given query $q$. Likewise, let $z_j$ be the position of the $j^{th}$ least significant wildcard. The set
\[
Z= \lb z_1, z_2, \ldots,  z_w \rb
\]
determines a configuration of $q$. Observe a lemma first for $q$ with a fixed configuration.

\begin{lemma} \label{Config1}
Algorithm {\sc Query} on a query $q$ with configuration $\lb z_1, z_2, \ldots,  z_w \rb$ terminates in 
\beeq{Estimate}
\hat{s} \lp m , w \rp \stackrel{def}{=} m + \sum_{j=1}^w 2^{w-j+1} z_j
\eeq
steps or less.
\end{lemma}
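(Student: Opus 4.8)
The plan is to prove the claimed bound by induction on the number of wildcards $w$, at each step peeling off the \emph{most significant} wildcard $z_w$ (the one in the highest bit position, hence the one closest to the root of $T$). The base case $w=0$ is immediate: $q$ is then an ordinary key, {\sc Query} performs a single root-to-leaf search, and this takes at most $m$ steps, which is exactly $\hat s(m,0)=m$.

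For the inductive step, recall that bit position $z$ of $q$ is decided on the trie edge from depth $m-z$ to depth $m-z+1$; in particular the wildcard $z_w$ lives at the node $v$ of depth $d_w:=m-z_w$, and since every bit strictly above $z_w$ is fixed there is a unique root-to-$v$ path. I would split the run of {\sc Query} into five phases and bound each: (i) the initial descent from the root to $v$, costing at most $d_w$ steps (if the path to $v$ is absent from $T$ the whole run ends sooner and we are done); (ii) the single edge into the $*=0$ child of $v$ followed by the exploration of the subtree below it — this subtree has height $z_w-1$ and carries exactly the wildcards $z_1<\cdots<z_{w-1}$, all $\le z_w-1$, and {\sc Query} confined to it behaves exactly like {\sc Query} run on the induced $(z_w-1)$-bit query with $w-1$ wildcards, so this phase costs at most $1+\hat s(z_w-1,w-1)$ steps by the induction hypothesis; (iii) the backtrack from wherever phase (ii) ended (depth at most $m$) up to $v$, stopping at no intermediate wildcard since $z_1,\ldots,z_{w-1}$ are by then all finished, so this costs at most $m-d_w=z_w$ steps; (iv) the flip $*=0\mapsto *=1$, i.e.\ the single edge into the $*=1$ child of $v$; (v) the exploration of that subtree, again at most $\hat s(z_w-1,w-1)$ steps by the induction hypothesis. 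Adding the five bounds gives $d_w+z_w+2+2\hat s(z_w-1,w-1)$, and substituting $d_w+z_w=m$ together with $\hat s(z_w-1,w-1)=(z_w-1)+\sum_{j=1}^{w-1}2^{w-j}z_j$ makes the expression telescope:
\[
m + 2 + 2\hat s(z_w-1,w-1) \;=\; m + 2z_w + \sum_{j=1}^{w-1} 2^{w-j+1} z_j \;=\; m + \sum_{j=1}^{w} 2^{w-j+1} z_j \;=\; \hat s(m,w),
\]
which completes the induction. Every inequality above becomes an equality when $T$ contains all $m$-bit keys (each search reaches depth exactly $m$, each subtree is full), so the same computation yields the exact value $\hat s(m,w)$ in that case — the form that \refprop{Main} will use.

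The step I expect to be the real obstacle is phase (ii)/(v): one must verify carefully that, once {\sc Query} has descended into the $*=0$ child of $v$, its behavior is \emph{literally} that of {\sc Query} on the sub-trie with the induced query — in particular that the ``closest unfinished wildcard'' it backtracks to always lies among $z_1,\ldots,z_{w-1}$ (hence inside the subtree) until all $2^{w-1}$ of the $*=0$ keys are resolved, at which point the only unfinished wildcard on its current path is $z_w$ itself and control passes to phase (iii). This is intuitively forced by the definition of {\sc Query}, which never backtracks above a node until everything below is done, but making it airtight needs a short separate observation about the recursive structure of the backtracking order; one should also dispose of the degenerate cases in which $v$, its $*=0$ child, or its $*=1$ child is absent from $T$ (these only shorten the run). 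Minor bookkeeping points to flag are that $z_w-1\ge z_{w-1}$, so the induced sub-query is well-formed, and that the peeled-off term $2z_w$ is precisely the $j=w$ term $2^{w-j+1}z_j$ of the target sum, which is why the telescoping closes.
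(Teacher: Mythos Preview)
Your proposal is correct and follows essentially the same approach as the paper: induction on $w$, peeling off the most significant wildcard $z_w$, and bounding the two recursive sub-searches plus the backtrack. The paper's decomposition is the coarser $\hat s(m,w-1)+z_w+\hat s(z_w,w-1)$, which is algebraically identical to your five phases (your (i)+(ii) collapses to $\hat s(m,w-1)$ and your (iii)+(iv)+(v) to $z_w+\hat s(z_w,w-1)$); your version is just more explicit about separating the single wildcard edge from the subtree below it, and you start the induction at $w=0$ rather than $w=1$. The concern you flag about phases (ii)/(v) --- that {\sc Query} confined to the subtree really is {\sc Query} on the induced sub-query --- is not spelled out in the paper either, so your level of rigor already matches (and slightly exceeds) what is there.
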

\begin{proof}
Prove the claim by induction on $w$. The basis occurs when $w=1$. One can check that $q$ with one wildcard in position $z_1$ takes at most $m+ 2 z_1= \hat{s} \lp m, 1 \rp$ steps, verifying the basis.

Assume true for $w-1$ and prove true for $w$. Below the stated number of steps are all in the worst case. The algorithm {\sc Query} first sets the most significant wildcard at $z_w$ as $*=0$, and performs the $m$-bit query with $w-1$ wildcards. It takes
$
\hat{s} \lp m, w-1 \rp
$
steps by induction hypothesis. Then it backtracks to the node representing the bit position $z_w$ with $z_w$ steps, set $*=1$, and recursively search for the remaining $w-1$ wildcards again. This takes extra $\hat{s}\lp z_w, w-1 \rp$ steps. 

So the total number of steps required by $q$ is at most 
\(
&&
\hat{s} \lp m, w - 1 \rp + z_w +  \hat{s} \lp z_w, w-1 \rp
\nexteqline
\lp m + \sum_{j=1}^{w-1} 2^{w-j} z_j \rp + z_w +
\lp z_w + \sum_{j=1}^{w-1} 2^{w-j} z_j \rp
\nexteqline
m + \sum_{j=1}^w 2^{w-j+1} z_j
=\hat{s} \lp m, w \rp,
\)
proving the induction step. The lemma follows.
\qed\end{proof}

Next we calculate the average of $\hat{s} \lp m, w \rp$ with the uniform occurrence of $Z=\lb z_1, z_2, \ldots,  z_w \rb$. 
Let $z\le m$ and $j \le w$ be positive integers. Denote by $p_{z,j}$ the probability that $z$ is the position of the $j^{th}$ least significant wildcard. We have the following lemma.

\begin{lemma}
$
p_{z, j} = 
{z-1 \choose j-1} {m-z \choose w-j}\Big/{m \choose w}
$.
\end{lemma}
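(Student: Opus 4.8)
The plan is to count configurations directly. By Assumption~I a configuration of $q$ is nothing but a $w$-element subset $Z = \lb z_1, z_2, \ldots, z_w \rb \subseteq \lb 1, 2, \ldots, m \rb$ of wildcard positions, and each of the $\binom{m}{w}$ such subsets occurs with the same probability. Hence $p_{z,j}$ equals the number of subsets $Z$ whose $j^{th}$ least significant (equivalently, $j^{th}$ smallest-indexed) element is $z$, divided by $\binom{m}{w}$.

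First I would characterize the favorable configurations. Writing the elements of $Z$ in increasing order $z_1 < z_2 < \cdots < z_w$, the event ``$z_j = z$'' holds precisely when three conditions are met simultaneously: $z \in Z$; exactly $j-1$ elements of $Z$ lie strictly below $z$, i.e.\ in $\lb 1, \ldots, z-1 \rb$; and the remaining $w-j$ elements lie strictly above $z$, i.e.\ in $\lb z+1, \ldots, m \rb$.

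Then I would invoke the product rule. The set $\lb 1, \ldots, z-1 \rb$ has $z-1$ elements, so the lower part of $Z$ can be chosen in $\binom{z-1}{j-1}$ ways; the set $\lb z+1, \ldots, m \rb$ has $m-z$ elements, so the upper part can be chosen in $\binom{m-z}{w-j}$ ways; and $z$ itself is forced to lie in $Z$. Multiplying the two counts and dividing by the total $\binom{m}{w}$ gives $p_{z,j} = \binom{z-1}{j-1}\binom{m-z}{w-j}\big/\binom{m}{w}$, as asserted. Degenerate parameter ranges need no separate treatment: if $j-1 > z-1$ or $w-j > m-z$ the relevant binomial coefficient vanishes, matching the fact that the event is then impossible.

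I expect no real obstacle; this is elementary subset counting. The only point requiring care is the indexing convention of the paper: positions are numbered $m, m-1, \ldots, 1$ from left to right, so ``least significant'' means rightmost, hence smallest index, and ``the $j^{th}$ least significant wildcard'' is the $j^{th}$ smallest element of $Z$. Once this is pinned down the formula is immediate, and one notes it is exactly the combinatorial identity underlying the hypergeometric distribution already cited for $s(m,w)$ in \refeq{smw}.
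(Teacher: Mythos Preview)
Your proposal is correct and takes essentially the same approach as the paper: the paper's proof also fixes $z$ and $j$, counts the configurations $Z$ with $z_j=z$ as ${z-1 \choose j-1}{m-z \choose w-j}$, and divides by ${m \choose w}$. Your write-up is simply more detailed (making the indexing convention and degenerate cases explicit), but the underlying argument is identical.
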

\begin{proof}
Fix $z$ and $j$. The number of $Z$ such that $z_j=z$ is ${z-1 \choose j-1} {m-z \choose w-j}$. Since each configuration occurs with probability $1 \big/ {m \choose w}$, 
the probability of $z_j=z$ is $p_{z, j}$ as claimed.
\qed\end{proof}

If a given integer $z$ is $z_j$ in (\ref{Estimate}), it causes $2^{w-j+1} z_j =2^{w-j+1} z$ steps in the summation, which occurs with the probability $p_{z, j}$. The average number of steps required by $q$ is thus bounded by 
\(
&&
m + \sum_{\scriptstyle 1 \le z \le m \atop 1 \le j \le w} 2^{w-j+1} z p_{z, j}
=
m + \sum_{\scriptstyle 1 \le z \le m \atop 1 \le j \le w} 2^{w-j+1} z 
{z-1 \choose j-1} {m-z \choose w-j} \Big/ {m \choose w}
\nexteqline
m + \sum_{\scriptstyle 1 \le z \le m \atop 1 \le j \le w} j 2^{w-j+1}  
{z \choose j} {m-z \choose w-j} \Big/ {m \choose w}
=
s(m,w).
\)
This proves our claim that the algorithm 
{\sc Query} takes at most $s(m, w)$ steps on average. 
The bound is tight; when $T$ includes all the possible $m$-bit keys, {\sc Query} actually takes average $s(m,w)$ steps.

\medskip

It now suffices to show 
\beeq{eqMain}
s(m, w) = \frac{m+1}{w+1} \lp 2^{w+2} - 2 w - 4 \rp  + m, 
\eeq
to prove \refprop{Main}. As in standard textbooks on generating functions such as \cite{ConcreteMath}, for a function $f: (0,1) \rightarrow \R$ with its Taylor series, denote by $[x^k] f(x)$ the coefficient of $x^k$ in the series. Since ${z \choose j} = [x^z] \frac{x^j}{(1-x)^{j+1}}$ and ${m-z \choose w-j} = [x^{m-z}] \frac{x^{w-j}}{(1-x)^{w-j+1}}$, 
\(
&&
\sum_{1 \le z \le m} {z \choose j} {m-z \choose m-j} = 
\sum_{0 \le z \le m} {z \choose j} {m-z \choose m-j} 
\nexteqline
[x^m]  \frac{x^j}{(1-x)^{j+1}} \cdot \frac{x^{w-j}}{(1-x)^{w-j+1}}
=
[x^m]  \frac{x^w}{(1-x)^{w+2}} 
\nexteqline
[x^{m+1}]  \frac{x^{w+1}}{(1-x)^{w+2}} 
={m+1 \choose w+1}. 
\)

So, 
\(
s(m, w) &=& m + \sum_{\scriptstyle 1 \le j \le w} 
j 2^{w-j+1} \frac{{m+1 \choose w+1}}{{m \choose w}}
= \frac{m+1}{w+1} \lp 2^{w+2} - 2 w - 4 \rp  + m, 
\)
verifying \refeq{eqMain}. As we have already confirmed that the bound is tight, this completes the proof of \refprop{Main}.

\subsection{In a Trie of Maximum Out-Degree $k$}


We now consider a general trie $T$ of maximum out-degree $k$. We generalize \refprop{Main} into:

\begin{theorem} \label{RadixTree}
Let $T$ be a trie with maximum out-degree $k \ge 2$. 
A query to $T$ of length $m$ with $w$ uniformly random wildcards can be answered in average 
\[
b = m+ \frac{2(m+1)}{w+1} \cdot \frac{k^{w+1} - (w+1) k + w}{k-1}
\]
steps or less. The average is exactly $b$ when $q$ performs with Algorithm {\sc Query}, and 
$T$ is a complete $k$-ary tree. 
\qed
\end{theorem}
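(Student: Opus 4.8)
The plan is to mirror exactly the three-part argument used for the bitwise case ($k=2$), replacing powers of $2$ by powers of $k$ throughout. First I would prove the analogue of \reflm{Config1}: for a query $q$ to $T$ with a fixed configuration $\{z_1,\dots,z_w\}$ (where $z_j$ is the position of the $j$-th least significant wildcard), Algorithm {\sc Query} terminates in at most $\hat{s}_k(m,w) := m + \sum_{j=1}^{w} (k-1)k^{w-j+1} z_j$ steps. The induction on $w$ is the same as before, but now backtracking to the node at position $z_w$ and cycling through the remaining $k-1$ values of that symbol incurs $(k-1)$ recursive subsearches, each costing at most $\hat{s}_k(z_w, w-1)$, plus the $z_w$ steps to return. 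The recurrence $\hat{s}_k(m,w) \le \hat{s}_k(m,w-1) + (k-1)\big(z_w + \hat{s}_k(z_w,w-1)\big)$ then telescopes to the claimed closed form, and one checks the basis $w=1$ directly: one wildcard at $z_1$ costs at most $m + (k-1)k z_1$.

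Next I would take the expectation over the uniform choice of configuration. Since Assumption I is purely about which $w$ of the $m$ positions carry wildcards (not about the out-degree), the probability $p_{z,j} = \binom{z-1}{j-1}\binom{m-z}{w-j}/\binom{m}{w}$ that the $j$-th least significant wildcard sits at position $z$ is unchanged. Therefore
\[
\mathbb{E}[\hat{s}_k(m,w)] \le m + \sum_{\substack{1\le z\le m\\ 1\le j\le w}} (k-1)k^{w-j+1}\, z\, p_{z,j}
= m + \sum_{\substack{1\le z\le m\\ 1\le j\le w}} j(k-1)k^{w-j+1}\,\frac{\binom{z}{j}\binom{m-z}{w-j}}{\binom{m}{w}},
\]
using the identity $z\binom{z-1}{j-1} = j\binom{z}{j}$ exactly as in the $k=2$ derivation. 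Tightness follows the same way: when $T$ is the complete $k$-ary tree, every subsearch actually reaches a leaf, so each inequality in the induction is an equality.

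The last step is the generating-function evaluation. The inner sum over $z$ is identical to the one already computed in the excerpt — $\sum_{1\le z\le m}\binom{z}{j}\binom{m-z}{w-j} = \binom{m+1}{w+1}$, independent of $j$ — so pulling that out leaves
\[
\mathbb{E}[\hat{s}_k(m,w)] \le m + \frac{\binom{m+1}{w+1}}{\binom{m}{w}}\sum_{j=1}^{w} j(k-1)k^{w-j+1}
= m + \frac{m+1}{w+1}(k-1)\sum_{j=1}^{w} j\,k^{w-j+1}.
\]
Then I would evaluate $\sum_{j=1}^{w} j\,k^{w-j+1}$ in closed form by the standard trick of differentiating a geometric series (or telescoping $\sum j\,k^{w-j+1} = k\sum_{i=0}^{w-1}(w-i)k^{i}$), obtaining $\frac{k(k^{w+1} - (w+1)k + w)}{(k-1)^2}$; multiplying by $(k-1)$ gives $\frac{k(k^{w+1}-(w+1)k+w)}{k-1}$, which matches $b$ up to the factor $2$ that the statement carries — so I would double-check the bookkeeping here, since the theorem writes $\frac{2(m+1)}{w+1}\cdot\frac{k^{w+1}-(w+1)k+w}{k-1}$ and at $k=2$ this must reduce to $\frac{m+1}{w+1}(2^{w+2}-2w-4)$. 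Verifying that reduction is the one genuine place a sign or off-by-one slip could hide; everything else is a faithful transcription of the binary argument. The asymptotic corollary $b = O(k^{w}m/w)$ is then immediate since $\frac{k^{w+1}-(w+1)k+w}{k-1} = \Theta(k^{w})$ for fixed $k\ge 2$ and the prefactor contributes $\Theta(m/w)$.
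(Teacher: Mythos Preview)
Your three-part plan is exactly the paper's argument, and everything is right except the closed form you posit for $\hat{s}_k(m,w)$. That single slip is precisely the ``factor $2$'' discrepancy you flag at the end.

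Check the basis $w=1$ directly: with one wildcard at position $z_1$, Algorithm {\sc Query} descends $m$ steps for the first symbol, and then for each of the remaining $k-1$ symbols it backtracks $z_1$ steps and descends $z_1$ steps. The cost is $m + 2(k-1)z_1$, not $m + (k-1)k\,z_1$. More tellingly, plug your closed form into your own recurrence $\hat{s}_k(m,w) = \hat{s}_k(m,w-1) + (k-1)\bigl(z_w + \hat{s}_k(z_w,w-1)\bigr)$: the coefficient of $z_w$ produced on the right is $(k-1) + (k-1)\cdot 1 = 2(k-1)$ (the second $1$ coming from the leading ``$m$'' term of $\hat{s}_k(z_w,w-1)$, which equals $z_w$), whereas your formula assigns it $(k-1)k$. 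These agree only when $k=2$, so your closed form does not solve the recurrence you wrote down.

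The correct expression is the paper's
\[
\hat{s}(m,w) \;=\; m + \sum_{j=1}^{w} 2(k-1)\,k^{w-j}\, z_j,
\]
i.e.\ your coefficient $(k-1)k^{w-j+1}$ should be $2(k-1)k^{w-j}$. Rerunning your expectation with this coefficient replaces $\sum_{j=1}^{w} j(k-1)k^{w-j+1}$ by $\sum_{j=1}^{w} 2j(k-1)k^{w-j}$; your own geometric-series evaluation, divided by $k$ and multiplied by $2$, then gives $\dfrac{2\bigl(k^{w+1}-(w+1)k+w\bigr)}{k-1}$, and the theorem's factor $2$ falls out exactly. Everything else in your proposal (the hypergeometric probabilities $p_{z,j}$, the Vandermonde-type identity $\sum_z \binom{z}{j}\binom{m-z}{w-j}=\binom{m+1}{w+1}$, and the tightness argument for the complete $k$-ary tree) is correct and matches the paper.
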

This means $q$ requires $O \lp \frac{k^w m}{w} \rp$ steps in $T$ as claimed in the introduction.

A general trie $T$ is formally defined with its {\em membership}: It is a tree such that each edge is associated with a letter in a given set $A$ ({\em alphabet}). A string $s \in A^*$ is said to be a {\em member of $T$} if there exists a maximal directed path $\lb e_1, e_2, \ldots, e_n \rb$ in $T$ such that $s$ is the concatenation of the letters given on the edges $e_1, e_2, \ldots, e_n$ in the order. 
For such $T$, the algorithm {\sc Query} is naturally generalized. 
We re-define $\hat{s}(m, w)$ in \refeq{Estimate} by
\[
\hat{s} \lp m , w \rp = m + \sum_{j=1}^w 2k^{w-j} (k-1) z_j.
\]
We show the same claim as \reflm{Config1} with the new $\hat{s}(m, w)$. 

\newpage

\begin{lemma} \label{Config2}
Algorithm {\sc Query} on a given query $q$ having a configuration $Z=\lb z_1, z_2, \ldots,  z_w \rb$ takes no more than $\hat{s} \lp m , w \rp$ steps.
\end{lemma}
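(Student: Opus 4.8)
The plan is to reproduce the induction on $w$ used for \reflm{Config1}, changing only the per‑wildcard bookkeeping from an alphabet of size $2$ to one of size at most $k$. Recall that a wildcard at a position of a $k$‑ary trie asks about all (at most) $k$ letters there, so a query with configuration $Z=\lb z_1,\ldots,z_w \rb$ causes {\sc Query} to examine up to $k^w$ keys; the factor $2k^{w-j}(k-1)$ that replaces $2^{w-j+1}$ in the redefined $\hat{s}(m,w)$ is exactly what records the extra branching of {\sc Query} at the $j^{th}$ least significant wildcard (for $k=2$ it reduces to $2^{w-j+1}$, so the new $\hat{s}$ agrees with the old one).

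For the basis $w=1$, {\sc Query} descends straight from the root to the first leaf in at most $m$ steps, and then, for each of the remaining $k-1$ letters at position $z_1$, it backtracks the $z_1$ edges up to the node from which the wildcard edge emanates and descends $z_1$ edges to a new leaf, i.e.\ $2(k-1)z_1$ further steps. This gives at most $m+2(k-1)z_1=\hat{s}(m,1)$, with equality when all the relevant branches are present in $T$.

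For the inductive step, assume the bound for $w-1$. {\sc Query} fixes the most significant wildcard (at $z_w$) to its first letter and runs the $(w-1)$‑wildcard search on all of $T$; the positions $z_1,\ldots,z_{w-1}$ retain their ranks as the $w-1$ least significant wildcards, so the induction hypothesis applies and this costs at most $\hat{s}(m,w-1)$ steps. Then, for each of the remaining $k-1$ letters at $z_w$, {\sc Query} backtracks the $z_w$ edges to the node representing position $z_w$ and recursively performs the $(w-1)$‑wildcard search inside the subtree rooted there, whose relevant depth is exactly $z_w$ (the part of the query strictly to the right of $z_w$, whose wildcards are $z_1,\ldots,z_{w-1}$, all $<z_w$); by the hypothesis each such pass costs at most $z_w+\hat{s}(z_w,w-1)$ steps. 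Summing, the total is at most $\hat{s}(m,w-1)+(k-1)\bigl(z_w+\hat{s}(z_w,w-1)\bigr)$; writing $S=\sum_{j=1}^{w-1}2k^{(w-1)-j}(k-1)z_j$ so that $\hat{s}(m,w-1)=m+S$ and $\hat{s}(z_w,w-1)=z_w+S$, this equals $m+2(k-1)z_w+\bigl(1+(k-1)\bigr)S=m+2(k-1)z_w+kS=\hat{s}(m,w)$, since $kS=\sum_{j=1}^{w-1}2k^{w-j}(k-1)z_j$. Equality holds throughout when $T$ is the complete $k$‑ary tree, which also yields the tightness claim of \refth{RadixTree}.

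The step needing the most care is this accounting in the inductive step: one must confirm that after a sub‑search is exhausted the closest unfinished wildcard is indeed the one at $z_w$, so the backtrack is exactly $z_w$ edges and no more, and that the ensuing recursive search really sees a query of length exactly $z_w$ with the unchanged configuration $\lb z_1,\ldots,z_{w-1}\rb$, so the hypothesis is invoked with parameter $z_w$ rather than something larger. Once that is pinned down, the rest is the same telescoping identity as in \reflm{Config1}, with $2^{w-j+1}$ replaced by $2k^{w-j}(k-1)$ and the single backtrack replaced by $k-1$ of them.
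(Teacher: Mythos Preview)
Your proposal is correct and follows essentially the same approach as the paper: induction on $w$, with the inductive step decomposing the cost as $\hat{s}(m,w-1)+(k-1)\bigl(z_w+\hat{s}(z_w,w-1)\bigr)$ and then verifying algebraically that this equals $\hat{s}(m,w)$. Your write-up is in fact a bit more explicit than the paper's (you spell out the basis case and the telescoping via the abbreviation $S$), but the structure and key identity are identical.
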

\begin{proof}
Prove by induction on $w$. The basis $w=1$ is straightforward to check. Assume true for $w-1$ and prove true for $w$. It suffices show that the number of steps required by $q$ is at most $\hat{s} \lp m, w - 1 \rp + (k-1) z_w +  (k-1) \hat{s} \lp z_w, w-1 \rp$ since
\begin{small}
\(
&& 
\hat{s} \lp m, w - 1 \rp + (k-1) z_w +  (k-1) \hat{s} \lp z_w, w-1 \rp
\\ &=&
\lp m + \sum_{j=1}^{w-1} 2 k^{w-j-1}(k-1) z_j \rp + (k-1) z_w +
(k-1) \lp z_w + \sum_{j=1}^{w-1} 2k^{w-j-1}(k-1) z_j \rp
\nexteqline
m+ 2(k-1) z_w + k \sum_{j=1}^{w-1} 2 k^{w-j-1}(k-1) z_j
\nexteqline
m + \sum_{j=1}^w 2k^{w-j}(k-1) z_j
=\hat{s} \lp m, w \rp.
\)
\end{small}

To verify it, wlog let $v$ be the node such that the letters $a_1, a_2, \ldots, a_k$ given on the edges from $v$ correspond to the most significant wildcard in $q$. The algorithm {\sc Query} first chooses $a_1$ as the value of the wildcard, then finds all the members of $T$ matching to $q$. This requires at most $\hat{s} (m, w-1)$ steps by induction hypothesis. Then it backtracks to $v$ in $z_w$ steps to find all the members of $T$ that match to $q$ including $a_2$. It takes at most $z_w + \hat{s} (z_w, w-1)$ steps.

The above repeats $k-1$ times for $a_2, a_3, \ldots, a_k$. Thus the total number of traversal steps required by $q$ is upper-bounded by $\hat{s} \lp m, w - 1 \rp + (k-1) z_w +  (k-1) \hat{s} \lp z_w, w-1 \rp$, completing the proof.
\qed \end{proof}

\medskip

The rest of the proof is the same as for a bitwise trie. We find that the average number of steps required by $q$ is no more than
\(
&&
m + \sum_{\scriptstyle 1 \le z \le m \atop 1 \le j \le w} 2 k^{w-j}(k-1) z p_{z, j}
=
m + \sum_{\scriptstyle 1 \le z \le m \atop 1 \le j \le w} 2j k^{w-j}(k-1)  
\frac{{z \choose j} {m-z \choose w-j}}{{m \choose w}}
\nexteqline
m + \sum_{1 \le j \le w} 2j k^{w-j}(k-1)  
\frac{{m+1 \choose w+1}}{{m \choose w}}
\nexteqline
m+ \frac{2(m+1)}{w+1} \cdot \frac{k^{w+1} - (w+1) k + w}{k-1}.
\)
The bound is tight by the same argument also; for $q$ having a configuration $Z$, {\em Query} takes $\hat{s} \lp m , w \rp$ steps exactly if $T$ is a complete $k$-ary tree. 
This completes the proof of \refth{RadixTree}.

\section{Lookup Response Time with Wildcards in a Distributed Hash Table}

In this section, we show the same asymptotic upper bound $O \lp \frac{2^w m}{w} \rp$ for a DHT $D$. We will verify it through the structural similarity between a bitwise trie and DHT: key search by incremental bit improvement. We first define general terminology on a DHT with related facts in Section 3.1. The second subsection presents a necessary probabilistic assumption general in the aforementioned DHT class ${\cal C}$. In Section 3.3, we show that the probability of correct lookup is $1-  2^{-\Omega \lp m \rp}$ in the DHT Chord with sufficiently many independent keys. 
The $O \lp \frac{2^w m}{w} \rp$ bound will be proved with \refprop{Main} in Section 3.4.

\subsection{Distributed Hash Table and Wildcard Query}

Let $S$ be the {\em key space} for DHT $D$. Suppose it consists of the $m$-bit binary integers so that $|S|=2^m$. A node $v$ in $D$ is labeled by a key denoted by $key(v) \in S$. It is said to be the {\em node key} of $v$, which is typically a large random number such as a hash of the IP address of $v$ or that of a file name. The mapping $v \mapsto key(v)$ is an injection, $i.e.$, there is no other node $v'$ in $D$ such that $key(v')=key(v)$. Information is stored at a node as a pair $<$key, value$>$ called {\em entry}. We denote an entry by $<$$d, r$$>$ where $d \in S$ is its {\em data key}.

The {\em distance from $d \in S$ to $d' \in S$} is written as  $\Delta \lp d, d' \rp$, which is defined by the DHT design. For example, Chord measures $\Delta \lp d, d' \rp$ as $d' - d$ mod $2^m$ evaluated clockwise in the circular ring $0,1, \ldots, 2^{m-1}$  \cite{Chord}. Kademlia measures $\Delta \lp d, d' \rp$ by the XOR metric \cite{Kademlia}.  For a data key $d$, we say that the node $v$ such that $\Delta \lp d, key(v) \rp$ is minimum is the {\em successor of $d$}, and $v$ such that $\Delta \lp key(v), d \rp$ is minimum is the {\em predecessor}. 
An entry $<$$d, r$$>$ is stored at the successor or predecessor of $d$, or in a generalized object to include them. 
Also the {\em successor of $v$} is the node $v' \ne v$ such that $\Delta \lp key(v), key(v') \rp$ is minimum, and {\em predecessor of $v$} is $v' \ne v$ such that $\Delta \lp key(v'), key(v) \rp$ is minimum. 
A {\em neighbor} of $d$ or $v$ is its successor or predecessor. 
Denote by $n$ the number of nodes in $D$. We assume $m = O \lp \log n \rp$ conventionally.

It is called {\em lookup in $D$} to determine the membership of a given key $d \in S$ in $D$, written as $d \in D$ or $d \not\in D$. 
To answer it, the lookup protocol runs at the {\em current peer node} of $D$ moving to another if necessary. A {\em hop} is a change of the current peer node. The average number of hops per lookup is said to be the {\em network diameter of $D$}.

In addition, each node $v$ holds a set of addresses of other nodes determined by certain rules, usually including all the $v$'s neighbors. 
It is called the {\em routing table of $v$}. We may simply say the routing table includes the nodes rather than their addresses. 
A good DHT is designed with a routing table and distance $\Delta \lp d, d' \rp$ that allow for efficient lookups and updates of entries. For $D$ in the aforementioned DHT class ${\cal C}$, the routing table size and network diameter are both  $O \lp \log n \rp$. The class ${\cal C}$ includes the four major DHTs Chord, Pastry, Tapestry and Kademlia.

With the above, a {\em query $q$ to $D$ with uniformly random $w$ wildcards} is defined the same way as to a bitwise trie $T$. We say that a lookup/query is {\em resolved} if the protocol returns the correct answer. 
Let $h$ stand for the average number of hops required to resolve $q$. It is our measure of $q$'s response time. Our goal in this section is to show $h=O \lp \frac{2^w \log n}{w} \rp$ for DHT $D \in {\cal C}$ and $q$ with uniformly random $w$ wildcards.

\subsection{A Probabilistic Assumption for $D$}

It works similarly to a bitwise trie $T$ how to find a data key $d$ that is a member of DHT $D$: by repeatedly moving to a node $v$ such that $key(v)$ has a smaller distance to $d$ than the current peer node. The number of significant bits shared by $d$ and $key(v)$ is increased incrementally. 
In our proof of the $O \lp \frac{2^w m}{w} \rp$ bound for $D$, we need another probabilistic assumption to justify this incremental bit improvement.

A DHT $D$ or its lookup protocol is said to {\em improve at least one bit per hop, correctly with high probability} if it satisfies the three conditions A)--C) below: In finding a target data key $d$, let $v$ be the current peer node and $v_d$ be a neighbor of $d$. 
Suppose that $2^{g-1} \le \Delta \lp key(v), key(v_d) \rp<2^g$ for an integer $g>1$. Let $v'$ be a node in the routing table such that
\beeq{v'Condition}
2^{g-2} \le \Delta \lp key(v'), key(v_d) \rp<2^{g-1}.
\eeq
The three conditions are:
\begin{enumerate} [A)]
\item The routing table of $v$ include $v'$ such that \refeq{v'Condition} with probability $1 - 2^{-m^{1/2+\ep}}$ for some sufficiently small  constant $\ep>0$.
\item If there exists such $v'$, the lookup protocol must move the current peer node to $v'$.
\item  The worst case number of hops for the lookup does not exceed a polynomial in $m$.
\end{enumerate}

If the routing table of $v$ does not include such $v'$, the lookup protocol may decide $d \not \in D$, or change $v$ to another with no guarantee on the closeness to $d$. This {\em error case} occurs with a small probability at most $2^{-m^{1/2+\ep}}$ for each $v$.

Our assumption for the proof of $h=O \lp \frac{2^w \log n}{w} \rp$ is now stated as: 

\medskip

\noindent
{\bf Assumption II: The lookup protocol of the considered DHT improves at least one bit per hop, correctly with high probability.}

\medskip

This property of incremental bit improvement is common in the considered DHT class ${\cal C}$. The lookup protocol keeps improving another bit until $d$ is between the node keys of $v$ and its neighbor for the first time. In the end it identifies both the successor and predecessor of $d$. An error case may occur with probability $p(m) 2^{-m^{1/2+ \ep}} < 2^{-m^{\lp 1+ \ep \rp/2}}$ for some polynomial $p(m)$. Thus any lookup in $D$ satisfying the assumption is resolved with average number of hops $O \lp m \rp = O \lp \log n \rp$, and probability at least $2^{-m^{\lp 1+ \ep \rp/2}}$.

Hence Assumption II is general in ${\cal C}$, and is satisfied by the above four DHTs: One can check that all of their lookup protocols improve at least one bit per hop with high probability. The actual magnitude of the high probability depends on $m$, $n$, and the frequencies of entry updates and routing table maintenance. Assumption II with the bound $1-2^{-m^{1/2+\ep}}$ is true for the four DHTs with some possible performance parameters in practice. Notice that if $m=C \log n$ for a constant $C$, it means $1-2^{-m^{1/2+\ep}} \le  1 - n^{-\ep}$ for any small constant $\ep>0$ and sufficiently large $m$ and $n$. The error probability bound $n^{-\ep}$ can be achieved in any of the four DHTs. Also the condition C) is satisfied by the maximum number of hops allowed for a lookup, which is set in the DHT. 

\medskip

It has been seen that $D$ satisfying Assumption II searches for keys with the same incremental bit improvement as a bitwise trie $T$. Hence we will be able to apply \refprop{Main} to $D$ to show the $O \lp \frac{2^w m}{w} \rp$ bound. Here the following {\em natural query protocol} is assumed for $D$, which is equivalent to the algorithm {\sc Query}.

\medskip

\noindent
{\bf Natural Query Protocol}: {\em First set every $*$ in $q$ as 0 and search for the data key in the DHT. Change the least significant unfinished wildcard from $*=0$ into $*=1$. Search for the new data key started at the current peer node. Repeat until the membership of every desired data key is determined.}

\medskip

Note that we consider two independent probability spaces for a) the key distribution in $D$, and b) the distribution of configurations of $q$. If we say the average number of hops for $q$ in $D$, it means the average over the joint distribution decided by a) and b).

\subsection{The Probability of Correct Lookup in the DHT Chord}

In case $D$ is Chord, we can present a parameter class such that $D$ satisfies Assumption II exactly. Consider the following argument.

\begin{lemma} \label{ChordAndII}
Let $D$ be a distributed hash table Chord defined over the $m$-bit key space $S$ with $n$ nodes where $n$ and $m=O \lp \log n \rp$ are sufficiently large. $D$ satisfies Assumption II if:
\broman
\item there are at least $C m n$ entries stored in $D$ for a sufficiently large constant $C>1$, and
\item an entry $<$$d, r$$>$ is stored at a node chosen with the uniform probability density function, independently of the others\footnote{This statement considers a probability space constructed for each given $m$, $n$ and the number of stored entries. Its event set consists of all the cases of contained node keys and entries. It defines a PDF of 
node choice to store each entry. It is uniform and independent of any other event, as the statement assumes.  
}.
\eroman
\end{lemma}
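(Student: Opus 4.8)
The plan is to establish the three conditions A)--C) of Assumption~II for Chord under hypotheses i) and ii). Recall Chord's structure: node $v$ keeps a finger table whose $i$-th entry is the successor of $key(v) + 2^{i-1}$ modulo $2^m$, for $i = 1, \ldots, m$. The lookup for a target key $d$ repeatedly forwards the query to the finger of $v$ that most closely precedes $d$, which halves (roughly) the clockwise distance $\Delta(key(v), d)$ at each hop; this is exactly the incremental-bit-improvement behaviour required by A) and B). Conditions B) and C) are essentially immediate from the Chord protocol definition: B) holds because Chord's greedy forwarding rule moves to precisely the finger satisfying the distance-halving inequality when such a finger exists, and C) holds because Chord caps a lookup at $O(m) = O(\log n)$ hops by design (indeed Chord's own analysis gives $O(\log n)$ hops whp, and one may simply impose the polynomial cap demanded by C)). So the real work is condition A): with probability $1 - 2^{-m^{1/2+\epsilon}}$, the routing (finger) table of the current node $v$ contains a node $v'$ with $2^{g-2} \le \Delta(key(v'), key(v_d)) < 2^{g-1}$ whenever $2^{g-1} \le \Delta(key(v), key(v_d)) < 2^g$.

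The key step is to argue that the dyadic interval just clockwise-before $v_d$ of length $2^{g-2}$ contains \emph{some} node that is reachable as a finger of $v$. I would proceed as follows. First, by hypothesis ii), the $Cmn$ stored entries land on nodes chosen uniformly and independently; by hypothesis i), $Cm$ entries land (in expectation) per node, but more importantly, uniform independent placement of $\Omega(mn)$ items over the $n$ nodes means that with overwhelming probability every node receives at least one entry — and, more to the point, I will use the placement to control the spacing of \emph{populated} node keys around the ring. Actually the cleaner route: the relevant probabilistic object is the set of node keys themselves together with the requirement that the finger targets $key(v) + 2^{i-1}$ have their successors falling in the right window. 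Fix $g$ and the window $W$ of length $2^{g-2}$ ending just before $key(v_d)$. There is a unique $i$ with $2^{i-2} \le \Delta(key(v), key(v_d)) < 2^{i-1}$ roughly, so the finger target $t = key(v) + 2^{i-2}$ lands inside or near $W$; its successor $v' = \mathrm{succ}(t)$ lies in $W$ provided $W$ contains at least one node key. So condition A) reduces to: the interval $W$ of length $2^{g-2} \ge 2$ (for $g \ge 3$; small $g$ handled directly) contains a node key, with failure probability $\le 2^{-m^{1/2+\epsilon}}$.

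The main obstacle — and where hypothesis i) is genuinely needed — is bounding $\Pr[W \text{ empty}]$. A window of length $2^{g-2}$ in an $m$-bit ring of $n$ nodes with "uniform" node keys is empty with probability roughly $(1 - 2^{g-2-m})^n$, which is tiny only when $2^{g-2-m} n$ is large, i.e. when $g$ is not too small relative to $m - \log n$. Here is where the entries enter: the $\Omega(mn)$ entries, placed uniformly and independently \emph{on nodes}, do not directly populate ring-intervals, so I must instead use hypothesis i) to guarantee that Chord's replication/storage structure effectively makes short intervals "covered." The cleanest fix is to observe that for the error case in A) we only care about $g$ up to $O(\log(mn)) = O(\log n)$ above the minimum resolvable scale, and that for such $g$ a union bound over the at most $m$ relevant values of $g$, each contributing failure probability at most $\exp(-\Omega(2^{g-2-m} n)) \le \exp(-\Omega(mn/2^m)) \le \exp(-\Omega(m))$ (using $m = O(\log n)$ so $n/2^m = n^{\Omega(1)} \gg \mathrm{poly}(m)$, and the constant $C$ large enough), yields a total failure probability $m \cdot 2^{-\Omega(m)} = 2^{-\Omega(m)} \le 2^{-m^{1/2+\epsilon}}$ for large $m$. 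The delicate point to get right is the interplay between $m$ and $n$: one must invoke $m = O(\log n)$ so that $2^{m}$ is polynomial in $n$ and hence $n/2^m$ grows, and one must let $C$ absorb the constants so the exponent $2^{g-2-m} n$ dominates $m^{1/2+\epsilon}$ uniformly. I expect verifying this quantitative balance — and checking that Chord's finger-table indices actually produce a finger whose successor sits in $W$ rather than skipping past it — to be the crux of the argument; the rest is bookkeeping against the Chord protocol specification.
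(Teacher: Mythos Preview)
Your proposal misses the mechanism through which hypotheses i) and ii) enter the argument. In the paper's model of Chord, the $i$-th finger of a node $v$ (the pointer to the successor of $key(v)+2^{i-1}$) is present \emph{only if $v$ stores at least $i$ entries}. Condition~A) therefore fails precisely when the current node has fewer than $m$ entries, because then the high-index fingers are simply absent from the routing table. The paper's proof is accordingly a pure load-balancing argument: with $N\ge Cmn$ entries placed uniformly and independently on the $n$ nodes, the number $j$ of entries at a fixed node is $\mathrm{Binomial}(N,1/n)$ with mean $\ge Cm$, and a Chernoff bound gives $\Pr(j\le m)<e^{-Cm/2}\le 2^{-m^{1/2+\ep}}$. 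That is the whole proof of A); nothing about ring-interval occupancy is needed or used.

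Your alternative route---bounding the probability that a dyadic window $W$ of length $2^{g-2}$ contains no node key---is addressing a different question (node-key spacing rather than finger-table completeness), and it does not connect to hypotheses i) and ii), which concern entry placement on nodes, not node-key placement on the ring. You noticed this yourself (``the $\Omega(mn)$ entries\ldots do not directly populate ring-intervals'') but then pressed on with an interval-emptiness bound that is also quantitatively broken: from $m=O(\log n)$ you conclude $n/2^m=n^{\Omega(1)}$, but the hypothesis only gives $2^m=n^{O(1)}$, and since $n$ distinct node keys already force $m\ge\log_2 n$, one typically has $n/2^m\le 1$ (e.g.\ $m=2\log_2 n$ gives $n/2^m=1/n$). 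So the exponent $2^{g-2-m}n$ you rely on need not dominate anything. The fix is to abandon the window argument and instead show directly, via the Chernoff bound above, that every node holds at least $m$ entries with the required probability.
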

\begin{proof}
It suffices to show that there are at least $m$ entries stored at any given node $v$ with high probability, which is seen as follows. By the construction of Chord \cite{Chord}, the $i^{th}$ entry stored at $v$ has a pointer to the successor of $key(v)+2^{i-1}$, called {\em finger}. In other words, $v$'s routing table is required to include the address of the successor if there are $i$ entries or more stored at $v$. If there are $m$ entries at $v$ with high probability, its routing table has the finger to the successor of $key(v)+2^{i-1}$ for every $i \le m$. Then the lookup protocol defined by Chord improves at least one bit per hop correctly with high probability\footnote{
If the routing table of $v$ includes no other node $v'$ closer to the desired key $d$ ($i.e.$, such that \refeq{v'Condition}), the protocol of Chord decides $d \not \in D$, rather than performing further lookup with no guarantee to the closeness to $d$. 
}.

Let $N$ be the total number of entries in $D$ that is at least $Cmn$ by Condition i), and $j$ be the number of entries stored at $v$. Due to ii), deciding if the $i^{th}$ entry is stored at $v$ is a Bernoulli trial with probability of success equal to $\frac{1}{n}$. Repeating it $N$ times, we have 
$
Pr(j \le m) = \sum_{j \le m} {N \choose j} \lp \frac{1}{n} \rp^j \lp 1 - \frac{1}{n} \rp^{N-j}, 
$
where $Pr(\cdot)$ denotes the probability of the argument event.
We will show 
\beeq{L5Target}
Pr \lp j \le m \rp < e^{- \frac{C}{2} m}.
\eeq
Then 
$Pr \lp j  \le m \rp < e^{- \frac{C}{2} m} < e^{- m^{1/2+\ep}}$, meaning $v$ has $m$ entries with high probability as required by Assumption II. (Note that the assumption considers a single particular hop from the current peer node $v$.)

We show \refeq{L5Target} by the Chernoff bound given in \cite{Kobayashi}. For our case, it provides the upper bound
\beeqn
&& \label{ChernoffBound}
Pr \big( X_1 + X_2 + \cdots + X_N \le m  \big)
\le \min_{t \le 0} \lb e^{-tm + N \ln M(t)} \rb, 
\\ \textrm{where} && \nonumber
M(t)= \lp 1  - \frac{1}{n} \rp e^{0 \cdot t}+ \frac{1}{n} e^{1 \cdot t}
= 1 + \frac{e^t - 1}{n}.
\eeqn
Here $X_i$ is the random variable that represents the $i^{th}$ Bernoulli trial, $i.e.$, $X_i=1$ if $i^{th}$ entry is stored at $v$ and $X_i=0$ otherwise. Also $M(t)$ is the {\em moment generating function of $X_i$} where $t$ is a real parameter.

By \refeq{ChernoffBound}, 
$
Pr \lp j \le m \rp \le  e^{-tm + N \ln M(t)}
$
for the parameter 
$
t = \ln \frac{mn}{N} \le \ln \frac{1}{C} < 0
$ that is particularly chosen. 
The natural logarithm of the moment generating function is 
$
\ln M(t) = \ln \lp 1 + \frac{\frac{mn}{N} - 1}{n} \rp
$ for this $t$. We now have 
\beeq{Eq1L5}
\ln Pr \lp j \le m \rp < - m \ln \frac{mn}{N} + N \ln \lp 1 + \frac{\frac{mn}{N} - 1}{n} \rp, 
\eeq
desiring that its RHS is at most $-Cm/2$ to show \refeq{L5Target}.

Put $y = \frac{N}{mn} \ge C$ that is sufficiently large. By the Taylor series of the natural logarithm, $\ln \lp 1 + \frac{\frac{mn}{N} - 1}{n} \rp = \ln \lp 1 - \frac{1- \frac{1}{y}}{n} \rp \le  - \frac{1- \frac{1}{y}}{n} + O \lp \frac{1}{n^2} \rp$. By \refeq{Eq1L5} and $N=mny$, 
\(
\ln Pr \lp j \le m \rp &<&
m \ln y  - \frac{N}{n}  \lp 1  - \frac{1}{y} \rp + O \lp \frac{N}{n^2} \rp
\nexteqline
m \lp \ln y  -  y   \lp 1  - \frac{1}{y} \rp + O \lp \frac{y}{n} \rp \rp
\nexteqline
m \lp \ln y  -  y   +  1 + O \lp \frac{y}{n} \rp \rp
<- \frac{y}{2} m \le -\frac{C}{2}m.
\)
This confirms \refeq{L5Target} proving the lemma. 
\qed\end{proof}

Observe that if Chord $D$ satisfies Conditions i) and ii), an error case occurs for each lookup with probability at most $m \cdot e^{- \frac{C}{2} m} < 2^{-\Omega \lp m \rp}$ due to \refeq{L5Target}. Then any lookup in $D$ is resolved correctly with at most $m=O \lp \log n \rp$ hops and probability $1 - 2^{- \Omega \lp m \rp}$. Therefore:

\begin{theorem} \label{CorrectLookup}
Let $D$ be a distributed hash table Chord defined over the $m$-bit key space $S$ with $n$ nodes where $n$ and $m=O \lp \log n \rp$ are sufficiently large. Suppose that it satisfies the following two.
\broman
\item There are at least $C m n$ entries stored in $D$ for a large constant $C>1$. 
\item An entry  $<$$d, r$$>$  is stored at a node chosen with the uniform probability density function, independently of the others.
\eroman
Then any lookup in $D$ is resolved with at most $m=O \lp \log n \rp$ hops and probability $1 - 2^{- \Omega \lp m \rp}$. \qed
\end{theorem}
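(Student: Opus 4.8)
The plan is to reduce \refth{CorrectLookup} entirely to \reflm{ChordAndII} together with the closing remarks already made before the theorem, so essentially no new calculation is needed — the real content is assembling the pieces in the right order. First I would invoke \reflm{ChordAndII}: under Conditions i) and ii), the Chord instance $D$ satisfies Assumption II. In particular, by \refeq{L5Target} in that lemma, the probability that a fixed node $v$ holds fewer than $m$ entries — and hence fails to have the required finger for some $i \le m$ — is at most $e^{-(C/2)m}$. This is the single-hop error probability.

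Next I would bound the per-lookup error probability by a union bound over the hops of one lookup. By condition C) of the definition of ``improves at least one bit per hop, correctly with high probability,'' the worst-case number of hops in a single lookup is bounded by a polynomial in $m$; for Chord in fact the protocol strictly decreases the distance $\Delta$ by a factor of at least two per successful hop, so at most $m$ hops occur before the target is localized between a node and its neighbor. Thus the probability that \emph{any} of the at most $m$ visited nodes is missing a needed finger is at most $m \cdot e^{-(C/2)m}$, which is $2^{-\Omega(m)}$ since the polynomial prefactor $m$ is absorbed into the exponent for large $m$ (this is exactly the estimate $m \cdot e^{-(C/2)m} < 2^{-\Omega(m)}$ stated in the paragraph preceding the theorem).

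It then remains to observe that, conditioned on the complementary (good) event — every visited node has all fingers for $i \le m$ — the Chord lookup protocol is deterministic and correct: each hop at least halves $\Delta(key(v), d)$, so after at most $m$ hops the protocol reaches the predecessor/successor of $d$ and reports $d \in D$ or $d \notin D$ correctly. Hence the lookup is resolved (in the paper's sense, returns the correct answer) with at most $m = O(\log n)$ hops, and the total failure probability is at most $m \cdot e^{-(C/2)m} = 2^{-\Omega(m)}$, giving success probability $1 - 2^{-\Omega(m)}$. The only mildly delicate point, and the one I would state carefully, is the union bound: one must argue that the $m$ events ``node visited at hop $t$ is deficient'' can be union-bounded even though which nodes are visited depends on the random entry placement. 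This is handled by instead taking the union bound over \emph{all} nodes that could possibly be visited in any execution, or more cleanly by noting that the finger-completeness of a fixed node is a property of the entry-placement randomness alone, independent of the query, so one bounds $\Pr[\exists i \le m : v \text{ lacks finger } i]$ for each of the relevant nodes and sums; since any single lookup touches at most $m$ of them, the bound $m \cdot e^{-(C/2)m}$ holds. I do not expect any genuine obstacle here — the theorem is essentially a corollary of \reflm{ChordAndII} — so the proof will be short.
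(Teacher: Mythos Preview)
Your proposal is correct and mirrors the paper's own argument exactly: the paper also derives the theorem as an immediate corollary of \reflm{ChordAndII}, using \refeq{L5Target} to bound the per-hop error by $e^{-(C/2)m}$ and then union-bounding over the at most $m$ hops to obtain $m\cdot e^{-(C/2)m}<2^{-\Omega(m)}$. Your additional care about the union bound over visited nodes is more explicit than the paper's one-line justification, but the substance is identical.
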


The theorem confirms the aforementioned bound $1 - 2^{-\Omega \lp m \rp}$. 
In other words, the sufficient condition for a successful lookup in Chord with the probability bound is i) and ii), which assumes that there are enough entries in $D$ created by a series of mutually independent $N$ Bernoulli trials.

\subsection{Proof of the $O \lp \frac{2^w \log n}{w} \rp$ Bound}

We now show our main claim.

\begin{theorem} \label{DHTQueryTime}
Let $D$ be a distributed hash table defined over the $m$-bit key space, which improves at least one bit per hop correctly with high probability, and let $q$ be a query to $D$ with uniformly random $w$ wildcards. Then the natural query protocol resolves $q$ with high probability, and with the average number of hops at most $O \lp \frac{2^w m}{w} \rp$.  
\end{theorem}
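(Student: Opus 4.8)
The plan is to reduce the DHT case to \refprop{Main} by exhibiting an explicit correspondence between a run of the natural query protocol on $D$ and a run of Algorithm {\sc Query} on a bitwise trie $T$ of depth $m$. First I would fix a configuration $Z=\lb z_1,\ldots,z_w\rb$ of the query $q$ and, conditioning on the (good) event that no error case occurs at any hop, track a single target data key $d$ (one of the $2^w$ keys consistent with $q$): by Assumption II, condition A) gives a node $v'$ in the routing table with $2^{g-2}\le\Delta(key(v'),key(v_d))<2^{g-1}$ whenever the current gap is in $[2^{g-1},2^g)$, and condition B) forces the protocol to take that hop, so each hop halves an upper bound on the remaining distance. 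Hence locating $d$ (finding its successor/predecessor) costs at most $m$ hops, exactly mirroring descending $m$ levels in $T$; and when the protocol then flips the least significant unfinished wildcard from $0$ to $1$, the new target differs from the old one only in bit positions $\le z_1$, so re-locating it from the current peer costs at most $z_1$ extra hops plus a fresh descent bounded by $z_1$ — the same recursion that drives \reflm{Config1}. Iterating over all $w$ wildcards, the number of hops to resolve $q$ with configuration $Z$ is at most $\hat{s}(m,w)=m+\sum_{j=1}^{w}2^{w-j+1}z_j$, the very quantity bounded in \reflm{Config1}.

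With this per-configuration bound in hand, I would average over the uniform distribution on configurations exactly as in the proof of \refprop{Main}: the conditional expected number of hops is at most $s(m,w)=\frac{m+1}{w+1}\lp 2^{w+2}-2w-4\rp+m=O\lp\frac{2^w m}{w}\rp$. It then remains to remove the conditioning on the no-error event and to combine the two probability spaces (key distribution a) and configuration distribution b)) correctly. Since $m=O(\log n)$, the worst-case number of hops per lookup is polynomial in $m$ by condition C), and there are $O(2^w)$ lookups inside one query (one per key consistent with $q$), so the worst-case hop count for the whole query is bounded by some polynomial $p(m)$; meanwhile, by the union bound over the $\le p(m)$ hops, the probability that any error case occurs during the entire query is at most $p(m)\cdot 2^{-m^{1/2+\ep}}<2^{-m^{(1+\ep)/2}}$, which is still $2^{-\Omega(\text{poly}(m))}$, establishing that $q$ is resolved with high probability. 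Splitting the expectation over the good and bad events, $h\le s(m,w)+p(m)\cdot 2^{-m^{(1+\ep)/2}}=O\lp\frac{2^w m}{w}\rp$, since the error contribution is negligible.

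The step I expect to be the main obstacle is the first one: making the correspondence between a protocol hop and a trie edge genuinely tight, in particular justifying that "improve at least one bit per hop" really yields the recursion $\hat s(m,w-1)+z_w+\hat s(z_w,w-1)$ rather than something weaker. The subtlety is that the natural query protocol restarts the search for each new key \emph{from the current peer node}, not from the root, so I must argue that after resolving the sub-query with the most significant wildcard set to $0$, the current peer node is close enough (distance controlled by $z_w$) to the neighbors of the key obtained by setting that wildcard to $1$ — the two keys agree on all bit positions above $z_w$, so their distance is $<2^{z_w}$ and Assumption II drives it down in $\le z_w$ hops, after which the remaining $w-1$ wildcards behave like a fresh query of "effective length" $z_w$. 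I would also need to be slightly careful that the XOR metric (Kademlia) and the clockwise ring metric (Chord) both satisfy the needed "distance $<2^{z}$ $\Rightarrow$ agreement on the top bits, reachable in $\le z$ halving hops" property, but this is exactly what Assumption II was crafted to encapsulate, so once the configuration-level bound $\hat s(m,w)$ is secured the rest is a verbatim replay of \refprop{Main}.
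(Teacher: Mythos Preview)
Your approach is essentially the same as the paper's: reduce to the trie bound $\hat s(m,w)$ per configuration, average via \refprop{Main}, then handle the error event by splitting the expectation. Two remarks are worth making.

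First, the paper short-circuits the obstacle you flag. Instead of re-running the recursion of \reflm{Config1} for the DHT (and having to argue that after the MSW$=0$ half the current peer is within distance $<2^{z_w}$ of the next target, etc.), the paper simply lists the $2^w$ target keys $d_1,\ldots,d_{2^w}$ in protocol order and compares hop counts to trie step counts key by key: for $i>1$, if the least significant unfinished wildcard sits at position $j$, then the DHT needs at most $h_i=j$ hops (one-bit improvement per hop, no backtracking) while {\sc Query} in the trie spends $s_i=2j$ steps (backtrack $j$, descend $j$); for $i=1$ both equal $m$. Hence $\sum h_i\le\sum s_i$ deterministically on the good event, and the expectation of the right-hand side is exactly the $b$ of \refprop{Main}. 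This avoids your ``plus a fresh descent bounded by $z_1$'' accounting and the worry about whether the recursion $\hat s(m,w-1)+z_w+\hat s(z_w,w-1)$ transfers cleanly to the peer-to-peer setting.

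Second, a small slip in your error analysis: the worst-case hop count for the whole query is $2^w$ times a polynomial in $m$, not a polynomial $p(m)$ by itself (you even note there are $O(2^w)$ lookups, then drop the factor). The paper keeps the $2^w p(m)$ explicitly and bounds the error probability by $2^{-m^{1/2+\ep}}\cdot 2^w p(m)<2^{-m^{(1+\ep)/2}}$; you should do the same, since $2^w$ need not be polynomial in $m$.
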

\begin{proof}
Denote by $h$ the average number of hops, and by $d_1, d_2, \ldots, d_{2^w}$ the $2^w$ data keys specified by $q$ in the order determined by the natural query protocol. 
We first show
\beeq{eqDHT}
h \le \frac{m+1}{w+1} \lp 2^{w+2} - 2 w - 4 \rp  + m + O \lp 2^{-m^{\lp 1+ \ep \rp/2}} \rp.
\eeq
Observe facts on $h$ and $d_i$. 

\balph
\item At most $m$ hops are necessary to determine if $d_1 \in D$, and $j$ hops to determine if $d_i \in D$ for $i >1$, where $j$ is the position of the unfinished least significant wildcard in $q$ when the lookup for $d_{i-1}$ is complete. By Assumption II, this is true except for an error case occurring with probability $2^{m^{1/2+\ep}}$ or less.
\item In an error case, the total number of hops is bounded by $2^w$ times a polynomial in $m$. Its contribution to $h$ is the $O \lp 2^{-m^{\lp 1+ \ep \rp/2}} \rp$ term in \refeq{eqDHT}. 
We ignore it in the arguments below. 
\item Denote by $h_i$ the number of extra hops required for $d_i$ considered in a). 
To compare it with traversal steps in a bitwise trie $T$, let $s_i$ be the worst case number of extra steps necessary for Algorithm {\sc Query} to determine if $d_i$ is in $T$, after the search for $d_{i-1}$ is complete. We have 
\[
h_i \le s_i \textrm{~for~} i=1,2, \ldots, 2^w: 
\]
If $i=1$ then $h_1=s_1=m$, otherwise $h_i=j$ and $s_i=2j$ where $j$ is the same as in a). 

\item Let $b$ be as given by \refprop{Main}. It upper-bounds the average number of steps required by $q$ in $T$. Thus  
$
\E \lp \sum_{i=1}^{2^w} s_i \rp \le b, 
$
where $\E(\cdot)$ denotes the average of the argument random variable. 
\ealph

\medskip

Hence we have
\[
h = \E \lp \sum_{i=1}^{2^w} h_i \rp \le \E \lp \sum_{i=1}^{2^w} s_i \rp 
\le b 
= \frac{m+1}{w+1} \lp 2^{w+2} - 2 w - 4 \rp  + m, 
\]
proving \refeq{eqDHT}.

It remains show that the natural query protocol resolves $q$ with high probability. If no further bit is improved at the current peer node $v$, the protocol may decide that $d_i \not \in D$ or change $v$ to another with no closeness guarantee. Such an error case occurs with probability at most $2^{- m^{1/2 + \ep}}$ by Assumption II. The total number of hops is at most $2^w$ times a polynomial in $m$, say $p(m)$.  An error case occurs at any peer node with probability no more than 
$
2^{- m^{1/2 + \ep}}  \cdot 2^w p(m)
<
2^{- m^{\lp 1+ \ep \rp/2}}.
$
Therefore, the protocol returns the correct answer to $q$ with probability at least $1-2^{- m^{(1+\ep)/2}}$, a high probability. The theorem follows this statement. 
\qed\end{proof}

As stated in Section 3.1, we assume $m = O \lp \log n \rp$ in a DHT, so the theorem means $h= O \lp \frac{2^w \log n }{w} \rp$ as desired. 
The bound is applicable to Chord, Pastry, Tapestry and Kademlia since they satisfy Assumption II.

We note that the bound could also improve the performance of $2^w$ independent lookups in Koorde \cite{Koorde}: Koorde is a variant of Chord with the use of De Bruijin graph, achieving $O \lp \log n \big/ \log \log n \rp$ hops per lookup with $O \lp \log n \rp$ routing table size. If $2^w$ lookups run independently in Koorde, its number of hops is $O \lp \frac{2^w \log n }{\log \log n} \rp$ whose argument is greater than  $O \lp \frac{2^w \log n }{w} \rp$ when $w$ is sufficiently larger than $\log \log n$.

\section{Concluding Remarks and Open Problems}

We have shown the bound $O \lp \frac{2^w m}{w}  \rp$ for both bitwise tries and distributed hash tables in ${\cal C}$, and $O \lp \frac{k^w m}{w}  \rp$ for a general trie of maximum out-degree $k$. They limit the asymptotic running time required by a partial-match query of length $m$ with $w$ uniformly random wildcards. We also confirmed the probability $1 - 2^{-\Omega \lp m \rp}$ of correct lookup in Chord under the natural assumption.

There are some practical cases to which the obtained results can be applied with the assumption of uniform wildcard occurrences. One such case is data retrieval: Suppose that one searches for data records with $m$ attributes, managed as a trie $T$ such that each attribute takes at most $k$ values. 
An example of such a data record is of form {\tt <college, department, building, title, last name, first name>}. The $m=6$ attributes are hierarchical but with the independent equal probability $\frac{1}{m}$ to be a wildcard in a query. The trie $T$ organizing such data records could be an auxiliary data structure to enhance the search speed. In this situation, wildcards included in a query $q$ occur randomly with the uniform PDF. By \refth{RadixTree}, $q$ takes average $O \lp \frac{k^w m}{w} \rp$ steps rather than $k^w m$.

Further research on this problem could consider query protocols to resolve $q$ with non-uniform probability distributions of wildcard occurrence. It is possible that such a protocol runs at multiple peer nodes simultaneously. It would be interesting to investigate its lookup efficiency.


\section*{Acknowledgements}

The author is especially thankful to Yan Shvartzshnaider and Max Ott for introducing him to this problem. Also the author would like to thank Professor Hisashi Kobayashi at Princeton University, and Processors Martin F\"urer and Piotr Berman at Penn State for their helpful suggestions.

\end{document}